%% file: main.tex
\documentclass[conference]{IEEEtran}
\IEEEoverridecommandlockouts
\usepackage{amsmath,amssymb,mathtools,bm}
\usepackage{amsthm}
\usepackage{graphicx}
\usepackage{booktabs,tabularx,multirow}
\usepackage{algorithm}
\usepackage{algorithmic}
\usepackage{cite}
\usepackage{microtype}
\usepackage{xcolor}
\usepackage{url}
\usepackage{balance}
\usepackage{stfloats}
\input{numbers}

\newtheorem{theorem}{Theorem}
\newtheorem{proposition}{Proposition}
\newtheorem{corollary}{Corollary}

\newcommand{\cptp}{\textsc{cptp}}
\newcommand{\qmem}{\textsc{QMemTwin-AW}}
\newcommand{\qaic}{\textsc{QMemTwin-AIC}}
\newcommand{\onepm}{\textsc{1PM}}
\newcommand{\twopm}{\textsc{2PM}}
\newcommand{\passive}{\textsc{Passive-4}}
\newcommand{\ttmraw}{\textsc{TTM-Raw}}
\newcommand{\ttmcptp}{\textsc{TTM-CPTP}}
\newcommand{\mpencil}{\textsc{MatrixPencil}}
\newcommand{\ipinn}{\textsc{iPINN-1PM}}
\DeclareMathOperator{\AICc}{AICc}
\renewcommand{\arraystretch}{0.96}
\begin{document}

\title{Model-Uncertainty-Aware Digital Twins for
Non-Markovian Quantum Repeater Memories}
 \author{\IEEEauthorblockN{Muhammad Waqas Haseeb}
 \IEEEauthorblockA{}}
\maketitle

\begin{abstract}
Quantum-repeater controllers require reliable estimates of how stored
entanglement evolves while adjacent links are generated. A fixed
exponential lifetime is inexpensive but suppresses non-Markovian
revivals, whereas an overparameterized surrogate is unnecessary when
the correct microscopic family is known. We present QMemTwin, a
physics-constrained, model-uncertainty-aware digital twin that
calibrates a bank of contractive passive state-space models, combines
their amplitudes using finite-sample-corrected Akaike weights, and
abstains when temporal-holdout measurements indicate unsupported
dynamics. Passivity guarantees a completely positive and trace-
preserving amplitude-damping channel for every candidate and for their
convex amplitude average. We evaluate \Ncases{} synthetic memories,
including \Nsupported{} supported passive time-invariant cases and
\Nstructural{} nonstationary structural-shift cases. In the matched
one-pseudomode regime, the correctly specified \onepm{} fit remains
best. Across \Nunknown{} supported but family-uncertain cases, QMemTwin
reduces population RMSE by \PopReductionTwoPM{}\% and Bell-fidelity
RMSE by \BellReductionTwoPM{}\% relative to \twopm{}; both paired
effects remain significant after Holm correction. The always-flexible
\passive{} candidate achieves a slightly lower unknown-family mean but
is statistically indistinguishable from QMemTwin and incurs an
unnecessary accuracy penalty in the matched regime. The temporal-
holdout gate detects \StructuralDetectionPct{}\% of nonstationary
structural shifts at a \FalseAlarmPct{}\% false-alarm rate on supported
families, with AUROC \AdequacyAUC{}. In 2-, 4-, and 8-link repeater
chains, the mixed cohort shows no significant 8-link goodput gain,
whereas a predeclared high-backflow stress cohort yields a
\StrongEightGoodputGainPct{}\% gain over \twopm{} with paired
\(p=\StrongGoodputP\).
\end{abstract}

\begin{IEEEkeywords}
quantum networks, quantum repeaters, network digital twins, non-Markovian quantum memories, system identification, model averaging, uncertainty quantification
\end{IEEEkeywords}

\section{Introduction}
A quantum Internet will distribute entanglement among remote processors, memories, and sensors~\cite{wehner2018quantum,azuma2023repeaters}. Multinode entanglement and teleportation have already been demonstrated in laboratory networks~\cite{pompili2021multinode,hermans2022teleportation}; networking research now addresses link protocols, routing, provisioning, fidelity constraints, and purification~\cite{dahlberg2019link,shi2020concurrent,zeng2022,yang2023,zhang2025,alkobaisi2025artificial}. These mechanisms consume a common primitive: an estimate of the state of a Bell pair that has waited in memory for a random time while other links were being established.

Most network studies receive a supplied Markovian lifetime or depolarizing channel. This abstraction is convenient, but a structured reservoir can return information to a memory, producing nonmonotone decay and disconnected future service windows. Detailed simulators such as NetSquid, SeQUeNCe, and QuNetSim can execute time-dependent physical models~\cite{coopmans2021netsquid,wu2021sequence,yu2007tornado}; they do not, by themselves, identify which memory model is supported by sparse characterization data.

When the correct microscopic model family is known, a calibrated physics-based description is generally preferable to a flexible neural surrogate because it is more interpretable, computationally efficient, and statistically economical. Quantum-repeater controllers, however, rarely possess such certainty. Sparse characterization measurements may be consistent with a single environmental pole, several independent poles, coherently coupled modes, a continuum-like reservoir, or dynamics that are not represented by the assumed model class. The central challenge is therefore not to replace established physics with a neural approximation, but to construct a reliable memory digital twin that remains useful under physical-model uncertainty.

This paper introduces QMemTwin, a model-uncertainty-aware digital twin for non-Markovian quantum-repeater memories. QMemTwin calibrates a bank of contractive passive state-space models of increasing complexity, combines their predictions using a finite-sample-corrected model-averaging criterion, propagates parameter and model uncertainty to Bell-pair observables and entanglement service windows, and applies a temporal-holdout adequacy gate to identify unsupported dynamics and abstain from unsafe predictions. The passive parameterization guarantees contractive evolution and valid amplitude-damping channels by construction, while sequential recalibration allows the twin to update as new characterization data become available. An inverse physics-informed neural network is included only as a comparative baseline rather than as the proposed method. This formulation positions QMemTwin as a cross-layer system-identification and uncertainty-quantification framework for network digital twins and quantum-repeater control.

Our contributions are:
\begin{itemize}
\item We formulate repeater-memory identification under three distinct regimes: matched known physics, supported family uncertainty, and structural model mismatch. The former ``OOD'' five-mode experiment is renamed \emph{higher-order mismatch}; a separate nonstationary process provides genuine structural shift.
\item We develop \qmem, an AICc-weighted bank of passive latent generators. A small-sample correction is necessary because the flexible candidate uses 18 parameters for $n=3N=72$ scalar observations.
\item We prove contractivity and \cptp{} validity of every candidate and their amplitude average, derive Bell-fidelity and Werner-swapping error bounds, and provide a margin certificate for threshold decisions.
\item We introduce a temporal-holdout adequacy score that combines normalized prediction error, candidate disagreement, and residual correlation. Unsupported dynamics trigger abstention instead of an overconfident service window.
\item We evaluate \Ncases{} memories, published baselines, uncertainty, sequential updates, and two network cohorts. We report paired bootstrap intervals, Wilcoxon tests, Holm correction, effect sizes, wins/ties/losses, and both favorable and unfavorable results.
\item We make transfer-tensor treatment auditable by reporting \ttmraw{} in every software benchmark and using a separately labeled \ttmcptp{} projection in the controller.
\end{itemize}

\begin{figure*}[t]
\centering
\includegraphics[width=0.98\textwidth]{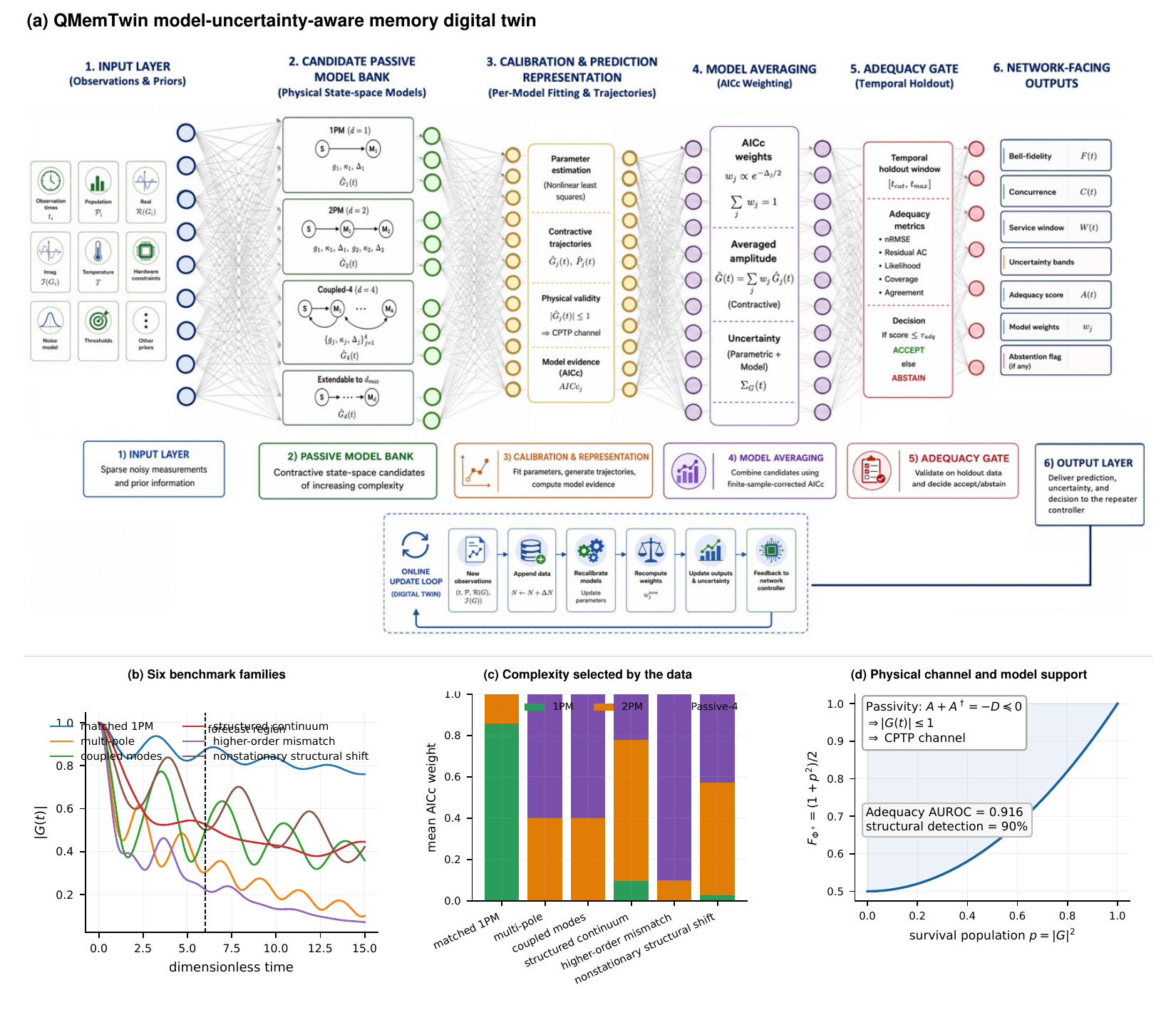}
\caption{QMemTwin workflow and evaluation regimes. Sparse quadrature/population measurements calibrate passive candidates. AICc weights adapt complexity, a temporal-holdout gate determines model support, and only supported physical trajectories are exposed as network-facing fidelity, concurrence, uncertainty, and service windows. The higher-order family tests capacity mismatch; the nonstationary family is the structural-shift test.}
\label{fig:framework}
\end{figure*}

\section{Related Work and Positioning}
\subsection{Quantum-network models and control}
Quantum-network control couples probabilistic optical generation, finite memories, classical heralding delays, local operations, and stochastic swapping. NetSquid supplies a discrete-event engine with extensible physical models~\cite{coopmans2021netsquid}; SeQUeNCe and QuNetSim provide complementary protocol abstractions~\cite{wu2021sequence,disiqueira2021qunetsim}. Routing and provisioning methods optimize entanglement throughput, virtual links, asynchronous generation, fidelity, and resource allocation~\cite{shi2020concurrent,zeng2022,yang2023,zhang2025}. QMemTwin is not another router. It supplies a calibrated memory-state interface to a controller and quantifies when that interface is unsupported.

\subsection{Non-Markovian System Identification under Model Uncertainty}
\label{sec:related_identification}

Non-Markovian dynamics admit several representations. Pseudomode
methods replace a structured reservoir with damped auxiliary modes,
embedding the reduced dynamics in an enlarged time-local model
\cite{garraway1997pseudomodes,pleasance2020pseudomodes}. For suitable
spectral densities, this construction is exact or systematically
improvable and motivates our passive state-space candidates. However,
it does not determine from sparse noisy data how many modes are needed,
whether coherent intermode couplings are required, or whether a
finite-dimensional time-invariant family is adequate.

Trajectory-based methods require fewer microscopic assumptions.
Transfer-tensor methods infer a discrete memory kernel from short-time
dynamical maps and extrapolate beyond the observation interval
\cite{cerrillo2014ttm}, but their standard form assumes
time-translation invariance and does not guarantee physical
contractivity under finite-data extrapolation. Process tensors provide
a general operational description of multi-time quantum processes and
interventions \cite{pollock2018process,alkobaisi2012interactive}, potentially with
tensor-network compression, but require richer multi-time
characterization data. Matrix-pencil methods efficiently identify
finite sums of damped exponentials \cite{hua1990matrix}, yet do not
independently enforce contractivity, complete positivity, or adequacy
of the selected exponential order.

Data-driven identification offers a complementary approach. Luchnikov
\emph{et al.} learn finite-dimensional Markovian embeddings from
sequential single-shot measurements \cite{luchnikov2020ml,ahmed2017open}, while
tensor-network models and physics-informed neural networks can learn
generic non-Markovian processes or solve forward and inverse quantum
master equations \cite{biswas2026,bae2007interactive}. Their physical validity,
extrapolation, computational cost, and interpretability depend on the
chosen parameterization and training procedure. When the correct
low-order microscopic family is already known, direct calibration is
the appropriate reference, as confirmed by our matched-family
experiments.

QMemTwin instead addresses uncertainty over which supported physical
family generated the sparse characterization data. It calibrates the
contractive passive model bank
\[
\mathcal{M}
=
\left\{
\mathrm{1PM},
\mathrm{2PM},
\mathrm{Passive\text{-}4}
\right\},
\]
combines candidate predictions using finite-sample-corrected Akaike
weights, and propagates parameter and model uncertainty to Bell-pair
observables and threshold-qualified service windows. Unlike hard model
selection, the average retains uncertainty when several candidates are
supported. Unlike unconstrained trajectory regressors, every candidate
and their convex amplitude average satisfy the contraction condition
required for a physical amplitude-damping channel.

QMemTwin does not assume that this bank represents every possible
non-Markovian process. A temporal-holdout adequacy gate evaluates
out-of-sample error, residual correlation, predictive coverage, and
candidate disagreement before releasing network-facing quantities.
The higher-order coupled benchmark tests latent-order mismatch within
the passive linear time-invariant class and is not treated as
structural out-of-distribution generalization. Structural mismatch is
tested separately with time-dependent passive generators outside the
fixed-generator bank, for which the intended response is abstention
rather than unsupported extrapolation.

For consistency, the transfer-tensor baseline is reported in two
forms. \emph{TTM-Raw} is the unconstrained extrapolation used in all
state- and fidelity-prediction benchmarks, together with its
channel-violation and divergence rates. Because an unphysical amplitude
cannot be supplied to the repeater controller, network experiments use
the separately labelled contractive projection \emph{TTM-CPTP}.
Table~\ref{tab:positioning} summarizes the methods by representation,
physical guarantees, adequacy testing, uncertainty treatment, and
network-level evaluation.

\begin{table}[t]
\centering
\caption{Method positioning. ``Supported'' refers to the method's
assumed family, not universal validity.}
\label{tab:positioning}
\scriptsize
\setlength{\tabcolsep}{2.2pt}
\renewcommand{\arraystretch}{0.95}
\begin{tabular}{@{}lcccc@{}}
\toprule
Method & Sparse & Order & CPTP & Net.\\
\midrule
Known-family fit & Yes & Fixed & By model & No\\
Transfer tensor & Short map & Order only & No$^\dagger$ & No\\
Process tensor & Multi-time & General & Conditional & No\\
Inverse PINN & Yes & Eq.-bound & Soft & No\\
Simulator & Supplied & N/A & Input & Yes\\
\qmem & Yes & AICc bank & Proven & Yes\\
\bottomrule
\end{tabular}

\vspace{1mm}
\raggedright\scriptsize
$^\dagger$Unless a separate physical projection is imposed.
\end{table}

\section{System Model and Identification Objective}
\subsection{Passive latent memory}
In the single-excitation sector, augment the observed memory coordinate with $m$ latent environmental coordinates:
\begin{equation}
\dot{\bm z}(t)=A\bm z(t),\quad \bm z(0)=\bm e_0,\quad
G(t)=\bm e_0^\dagger\bm z(t),
\label{eq:latent}
\end{equation}
where $G$ is the survival amplitude. We restrict each candidate generator to
\begin{equation}
A=-iH-\tfrac12D,\qquad H=H^\dagger,\quad D\succeq0.
\label{eq:passive}
\end{equation}
Independent pseudomodes use an arrowhead Hamiltonian,
\begin{align}
\dot c_s&=-i\sum_{r=1}^{m}g_ra_r,\nonumber\\
\dot a_r&=-\left(\kappa_r/2+i\omega_r\right)a_r-ig_rc_s,
\label{eq:star}
\end{align}
whereas the flexible four-mode candidate learns real symmetric mode--mode couplings as well. Positive couplings/damping rates are represented through bounded or softplus coordinates.

Tracing out the latent coordinates produces
\begin{equation}
\mathcal E_t(\rho)=K_0\rho K_0^\dagger+K_1\rho K_1^\dagger,
\quad
\begin{cases}
K_0=|0\rangle\!\langle0|+G(t)|1\rangle\!\langle1|,\\
K_1=\sqrt{1-|G(t)|^2}\,|0\rangle\!\langle1|.
\end{cases}
\label{eq:kraus}
\end{equation}
For two identical independent memories initially storing $|\Phi^+\rangle$, after local phase correction and with $p(t)=|G(t)|^2$,
\begin{equation}
F_{\Phi^+}(t)=\frac{1+p(t)^2}{2},\qquad C(t)=p(t)^2.
\label{eq:bell}
\end{equation}
An application's service set is
\begin{equation}
\mathcal W(F_{\min})=\{t\ge0:F_{\Phi^+}(t)\ge F_{\min}\},
\label{eq:window}
\end{equation}
which can contain disconnected intervals under information backflow.

\subsection{Measurements and three regimes}
At irregular times $t_i\le T_{\mathrm{obs}}$, the record contains noisy amplitude quadratures and population,
\begin{equation}
\widetilde G_i=G(t_i)+\epsilon_i^G,
\qquad \widetilde p_i=|G(t_i)|^2+\epsilon_i^p.
\label{eq:measurements}
\end{equation}
We distinguish:
\begin{enumerate}
\item \emph{matched}: a one-pseudomode truth and a \onepm{} candidate;
\item \emph{supported uncertainty}: multi-pole, coupled, continuum-discretized, and higher-order passive time-invariant dynamics;
\item \emph{structural shift}: a nonstationary channel-valid process outside the fixed-generator bank.
\end{enumerate}
The higher-order family is not called OOD: it is a five-mode coupled system evaluated with a four-mode flexible candidate, hence an order/capacity mismatch inside the same passive-LTI class.

For structural shift we use
\begin{equation}
G_{\mathrm{ns}}(t)=e^{-\Gamma(t)/2-i\phi(t)},\quad
\Gamma(t)=at+b[1-\cos(\Omega t)],
\label{eq:nonstationary}
\end{equation}
with $a,b>0$ and $\phi(t)=\omega_0t+c\sin(\Omega_\phi t)$. Since $\Gamma(t)\ge0$, $|G_{\mathrm{ns}}(t)|\le1$; however, the time-varying rate and infinite harmonic expansion are not exactly represented by the finite time-invariant candidate bank.

\begin{table}[t]
\centering
\caption{Principal notation.}
\label{tab:notation}
\footnotesize
\begin{tabularx}{\columnwidth}{@{}lX@{}}
\toprule
Symbol & Meaning\\
\midrule
$G(t),p(t)$ & survival amplitude and population\\
$A,H,D$ & passive generator, Hamiltonian, damping\\
$m,k_j$ & latent order and candidate parameter count\\
$N,n=3N$ & time samples and scalar residual count\\
$T_{\mathrm{obs}},T$ & observation and forecast horizons\\
$w_j$ & AICc weight of candidate $j$\\
$\mathcal W$ & threshold-qualified service set\\
$S_{\mathrm{adq}}$ & temporal-holdout adequacy score\\
$R_q,P_{\mathrm{viol}}$ & qualified goodput and violation rate\\
\bottomrule
\end{tabularx}
\end{table}

\section{QMemTwin Identification}
\subsection{Candidate calibration and AICc averaging}
The fused bank is $\mathcal M=\{\onepm,\twopm,\passive\}$ with $k_j\in\{3,6,18\}$. Markov, matrix-pencil, transfer-tensor, spline, and inverse-PINN methods are comparisons, not fused candidates. Each passive model minimizes the unpenalized observation residual plus a small numerical regularizer,
\begin{equation}
\mathrm{RSS}_j=\sum_{i=1}^{N}\!\left(
|\hat G_j(t_i)-\widetilde G_i|^2+
\beta[|\hat G_j(t_i)|^2-\widetilde p_i]^2\right),
\label{eq:rss}
\end{equation}
where the regularizer stabilizes optimization but is excluded from the information criterion.

With only $n=3N=72$ scalar residual components, the finite-sample correction is material for $k=18$. We use~\cite{akaike1974aic,hurvich1989aicc,burnham2004model}
\begin{align}
\mathrm{AIC}_j&=n\log(\mathrm{RSS}_j/n)+2k_j,\nonumber\\
\AICc_j&=\mathrm{AIC}_j+\frac{2k_j(k_j+1)}{n-k_j-1},
\label{eq:aicc}
\end{align}
then
\begin{align}
w_j&=\frac{\exp[-(\AICc_j-\AICc_{\min})/2]}
{\sum_{\ell\in\mathcal M}\exp[-(\AICc_\ell-\AICc_{\min})/2]},\nonumber\\
\hat G_{\mathrm{AW}}(t)&=\sum_jw_j\hat G_j(t).
\label{eq:average}
\end{align}
The uncorrected \qaic{} result is retained as a sensitivity ablation rather than silently selecting whichever criterion gives the lowest test error.

\subsection{Adequacy gate}
Model averaging is not evidence that the bank contains the truth. We reserve the latest quarter of characterization times as a temporal validation block, fit the bank to the earlier measurements, and compute
\begin{equation}
S_{\mathrm{adq}}=
\frac{\sqrt{E_G^2+0.7E_p^2}}{\sigma}
+0.35D_{\mathcal M}+0.50|\rho_1|,
\label{eq:adequacy}
\end{equation}
where $E_G$ and $E_p$ are holdout amplitude/population RMSE, $D_{\mathcal M}$ is normalized candidate disagreement, and $\rho_1$ is residual lag-one correlation. The threshold is fixed at the 95th percentile of 25 supported-family calibration cases. Evaluation uses a disjoint 25-case supported test set and all 10 structural-shift cases. A score above threshold causes abstention and requests a richer model or new measurements; it does not output an authoritative service window.

\subsection{Sequential updating and uncertainty}
A digital twin should update as telemetry arrives. For cumulative sample counts $N\in\{8,12,18,24\}$, candidates are refit and AICc weights recomputed. A parametric bootstrap perturbs the observed quadratures/populations at the measured noise level, recalibrates the bank, and forms pointwise population, fidelity, and service-window quantiles. This is computationally heavier than one deterministic fit but separates model uncertainty from a single optimizer trajectory.

\begin{algorithm}[t]
\caption{QMemTwin characterization and support gate}
\label{alg:qmem}
\footnotesize
\begin{algorithmic}[1]
\REQUIRE Measurements $\{t_i,\widetilde G_i,\widetilde p_i\}$, horizon $T$
\STATE Split chronologically into calibration and temporal holdout blocks
\FOR{$j\in\{\onepm,\twopm,\passive\}$}
\STATE Fit passive $A_j$ using~\eqref{eq:rss}; evaluate $\hat G_j(t)$
\ENDFOR
\STATE Compute AICc weights and $\hat G_{\mathrm{AW}}$ using~\eqref{eq:aicc}--\eqref{eq:average}
\STATE Evaluate $S_{\mathrm{adq}}$ by~\eqref{eq:adequacy}
\IF{$S_{\mathrm{adq}}$ exceeds calibrated threshold}
\RETURN \textsc{abstain}, residual diagnostics, request model expansion
\ENDIF
\STATE Bootstrap; map to $\hat F,\hat C,\hat{\mathcal W}$
\RETURN supported physical trajectory, weights, intervals, service windows
\end{algorithmic}
\end{algorithm}

\section{Formal Guarantees}
\begin{theorem}[Passive contraction and channel validity]
If $A=-iH-D/2$ with $H=H^\dagger$ and $D\succeq0$, then $\|e^{At}\bm e_0\|_2\le1$ and $|G(t)|\le1$ for every $t\ge0$. Consequently,~\eqref{eq:kraus} is \cptp.
\end{theorem}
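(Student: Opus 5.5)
The plan is a Lyapunov-type energy argument on the latent trajectory of~\eqref{eq:latent}, followed by a direct check of the Kraus completeness relation for~\eqref{eq:kraus}.

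\textbf{Step 1: the norm is nonincreasing.} Set $\bm z(t)=e^{At}\bm e_0$ and $V(t)=\|\bm z(t)\|_2^2=\bm z^\dagger\bm z$. Differentiate and substitute~\eqref{eq:passive}:
\[
\dot V=\bm z^\dagger(A+A^\dagger)\bm z .
\]
Because $H=H^\dagger$, the Hamiltonian parts cancel, giving $A+A^\dagger=-iH+iH^\dagger-D=-D$. Hence $\dot V=-\bm z^\dagger D\bm z\le 0$, since $D\succeq0$. It follows that $V(t)\le V(0)=\|\bm e_0\|_2^2=1$, that is, $\|e^{At}\bm e_0\|_2\le1$. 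As a cleaner alternative, I could note that $A+A^\dagger\preceq0$ makes $e^{At}$ a contraction semigroup, so $\|e^{At}\|_2\le1$ as an operator. This operator form also covers any initial condition.

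\textbf{Step 2: bound on the survival amplitude.} By Cauchy--Schwarz,
\[
|G(t)|=|\bm e_0^\dagger\bm z(t)|\le\|\bm e_0\|_2\,\|\bm z(t)\|_2\le1 .
\]

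\textbf{Step 3: channel validity.} Since $|G(t)|\le1$, the quantity $\sqrt{1-|G(t)|^2}$ is real, so $K_1$ is well defined. Compute the completeness sum
\[
K_0^\dagger K_0+K_1^\dagger K_1=|0\rangle\!\langle0|+|G|^2|1\rangle\!\langle1|+(1-|G|^2)|1\rangle\!\langle1|=I .
\]
Any map given in Kraus form is completely positive, and the relation $\sum_k K_k^\dagger K_k=I$ gives trace preservation. So~\eqref{eq:kraus} is \cptp{} for every $t\ge0$.

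I would also record one extension for later use in the paper. The averaged amplitude $\sum_j w_j\hat G_j$ has weights $w_j\ge0$ with $\sum_j w_j=1$. By the triangle inequality its modulus is at most $\sum_j w_j|\hat G_j|\le1$, so the same Step 3 argument applies to the amplitude average.

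No step is a real obstacle. The only point needing care is Step 1: the cancellation uses Hermiticity of $H$ rather than $H$ being real, and $A$ may be non-normal. For that reason I would argue through the quadratic form $V$, or through $A+A^\dagger\preceq0$, and not through eigenvalues of $A$, which would not by themselves bound the norm.
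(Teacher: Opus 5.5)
Your proposal is correct and follows the paper's proof essentially line for line: the energy identity $\frac{d}{dt}\|\bm z\|_2^2=\bm z^\dagger(A+A^\dagger)\bm z=-\bm z^\dagger D\bm z\le0$, then Cauchy--Schwarz for $|G(t)|\le1$, then the Kraus completeness relation $K_0^\dagger K_0+K_1^\dagger K_1=I$. Your extra remarks are accurate refinements within the same route, not a different argument: the operator-norm contraction, the observation that $\sqrt{1-|G|^2}$ is real so $K_1$ is well defined, and the convex-average extension, which is exactly the paper's subsequent corollary.
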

\begin{proof}
Let $\bm z(t)=e^{At}\bm e_0$. Then
$\frac{d}{dt}\|\bm z\|_2^2=\bm z^\dagger(A+A^\dagger)\bm z=-\bm z^\dagger D\bm z\le0$.
Because $\|\bm z(0)\|_2=1$, contraction follows. Cauchy--Schwarz gives $|G|=|\bm e_0^\dagger\bm z|\le1$, and $K_0^\dagger K_0+K_1^\dagger K_1=I$.
\end{proof}

\begin{corollary}[Physicality of amplitude model averaging]
If $w_j\ge0$, $\sum_jw_j=1$, and $|\hat G_j(t)|\le1$, then $|\sum_jw_j\hat G_j(t)|\le1$; the QMemTwin average is therefore \cptp.
\end{corollary}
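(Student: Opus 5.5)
The corollary follows from the triangle inequality combined with the preceding theorem. Write $\hat G_{\mathrm{AW}}(t)=\sum_j w_j\hat G_j(t)$, as in \eqref{eq:average}. Because every $w_j\ge0$, the triangle inequality for the complex modulus gives
\[
|\hat G_{\mathrm{AW}}(t)|\le\sum_j w_j|\hat G_j(t)|\le\sum_j w_j=1 .
\]
This is the only computation the corollary needs.

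It remains to check the hypotheses for the QMemTwin average specifically. The AICc weights in \eqref{eq:average} are ratios of positive exponentials, so $w_j>0$. By construction their sum over $\mathcal M$ is exactly one. Each candidate $\hat G_j$ comes from a fitted generator of the form $A_j=-iH_j-D_j/2$ with $H_j=H_j^\dagger$ and $D_j\succeq0$; the softplus and bounded coordinates guarantee this for every parameter value the optimizer can reach. The passive-contraction theorem then gives $|\hat G_j(t)|\le1$ for all $t\ge0$.

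With $|\hat G_{\mathrm{AW}}(t)|\le1$ in hand, substitute $\hat G_{\mathrm{AW}}$ for $G$ in \eqref{eq:kraus}. The coefficient $\sqrt{1-|\hat G_{\mathrm{AW}}|^2}$ is then real, and $K_0^\dagger K_0+K_1^\dagger K_1=|0\rangle\langle0|+(|\hat G_{\mathrm{AW}}|^2+1-|\hat G_{\mathrm{AW}}|^2)|1\rangle\langle1|=I$. This is exactly the Kraus completeness check used in the theorem, so the averaged channel is \cptp.

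There is no real obstacle here. The only point that needs care is that the convex combination is taken at the level of amplitudes, not channels. The averaged amplitude therefore need not arise from any single passive generator. This is harmless, because validity of \eqref{eq:kraus} depends only on the pointwise modulus bound, not on the existence of a generator.
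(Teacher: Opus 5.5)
Your proof is correct and is essentially the paper's own one-line argument, $|\sum_j w_j\hat G_j|\le\sum_j w_j|\hat G_j|\le 1$. You also make explicit what the paper leaves implicit: the AICc weights are strictly positive and sum to one, each fitted candidate is passive so Theorem~1 applies, and the Kraus operators satisfy completeness.

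One aside in your last paragraph is false as literally stated, though the proof does not depend on it. The averaged amplitude does arise from a single passive generator: the block-diagonal $\bigoplus_j A_j$, started from the unit vector $\bigoplus_j\sqrt{w_j}\,\bm e_0^{(j)}$ and rotated unitarily onto $\bm e_0$. This construction also gives a second proof of the corollary directly from Theorem~1.
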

\begin{proof}
The triangle inequality yields $|\sum_jw_j\hat G_j|\le\sum_jw_j|\hat G_j|\le1$.
\end{proof}

\begin{proposition}[Observable and swap error]
Let $p=|G|^2$ and $\hat p=|\hat G|^2$. Then
\begin{equation}
|\hat F_{\Phi^+}-F_{\Phi^+}|\le|\hat p-p|\le2|\hat G-G|.
\label{eq:bellbound}
\end{equation}
For Werner swapping $S(F_1,F_2)=F_1F_2+(1-F_1)(1-F_2)/3$,
\begin{equation}
|S(\hat F_1,\hat F_2)-S(F_1,F_2)|
\le|\hat F_1-F_1|+|\hat F_2-F_2|.
\label{eq:swapbound}
\end{equation}
\end{proposition}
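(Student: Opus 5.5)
The plan is to prove each inequality by direct algebra, using only the closed forms \eqref{eq:bell} and the bound $|G|,|\hat G|\le1$ from the contraction theorem and its corollary. I assume both the true and estimated amplitudes come from passive candidates or their convex average, so $p,\hat p\in[0,1]$.

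For the first inequality in \eqref{eq:bellbound}, I will use $F_{\Phi^+}=(1+p^2)/2$. Then
\[
\hat F_{\Phi^+}-F_{\Phi^+}=\tfrac12(\hat p^2-p^2)=\tfrac12(\hat p+p)(\hat p-p).
\]
Since $\hat p+p\le2$, the absolute value is at most $|\hat p-p|$.

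For the second inequality, I will write
\[
\hat p-p=|\hat G|^2-|G|^2=(|\hat G|+|G|)(|\hat G|-|G|).
\]
The reverse triangle inequality gives $\bigl||\hat G|-|G|\bigr|\le|\hat G-G|$, and the first factor is at most $2$. This yields $|\hat p-p|\le2|\hat G-G|$. The only point requiring care is invoking the unit-modulus bound for the estimated amplitude, and the corollary on amplitude averaging supplies exactly that.

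For the swap bound \eqref{eq:swapbound}, I will split the difference telescopically as
\[
S(\hat F_1,\hat F_2)-S(F_1,F_2)=\bigl[S(\hat F_1,\hat F_2)-S(F_1,\hat F_2)\bigr]+\bigl[S(F_1,\hat F_2)-S(F_1,F_2)\bigr].
\]
$S$ is affine in each argument, with $\partial S/\partial F_1=F_2-(1-F_2)/3=(4F_2-1)/3$. The first bracket therefore equals $\tfrac{4\hat F_2-1}{3}(\hat F_1-F_1)$, and the second is handled symmetrically. For $F\in[0,1]$ the coefficient $(4F-1)/3$ lies in $[-1/3,1]$, so its modulus is at most $1$, which gives the claim.

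The main obstacle is establishing that the fidelities lie in $[0,1]$. Here $F_{\Phi^+}\in[1/2,1]$ follows from $p\in[0,1]$, so the coefficient is in fact at most $1$ and the bound holds, possibly with a sharper constant.
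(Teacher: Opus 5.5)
Your proposal is correct and follows the paper's proof step for step: the factorization $\tfrac12(\hat p+p)(\hat p-p)$ with $\hat p+p\le2$, the bound $\bigl||\hat G|^2-|G|^2\bigr|\le(|\hat G|+|G|)\,|\hat G-G|$ using $|G|,|\hat G|\le1$, and a unit bound on the partial derivatives of $S$. Your exact telescoping, valid because $S$ is affine in each argument, is the paper's mean-value argument carried out on $[0,1]^2$ rather than $[1/4,1]^2$; the only quibble is your closing remark, since no sharper constant is available: equality holds when $F_2=\hat F_2=1$.
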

\begin{proof}
Equation~\eqref{eq:bell} gives $|\hat F-F|=|\hat p^2-p^2|/2\le|\hat p-p|$. The second inequality follows from $||\hat G|^2-|G|^2|\le(|\hat G|+|G|)|\hat G-G|$. Both partial derivatives of $S$ have magnitude at most one on $[1/4,1]^2$, so the mean-value theorem gives~\eqref{eq:swapbound}.
\end{proof}

\begin{proposition}[Threshold margin certificate]
If $|\hat F(t)-F(t)|<|F(t)-F_{\min}|$, then predicted and true threshold decisions agree.
\end{proposition}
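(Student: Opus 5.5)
The plan is to show directly that $\hat F(t)$ lies on the same side of $F_{\min}$ as $F(t)$. We fix $t$ and read the threshold decision as the indicator $\mathbf 1\{F(t)\ge F_{\min}\}$, i.e., membership of $t$ in the service set $\mathcal W(F_{\min})$ of~\eqref{eq:window}, and likewise $\mathbf 1\{\hat F(t)\ge F_{\min}\}$ for the prediction. Set $\delta=F(t)-F_{\min}$ and $e=\hat F(t)-F(t)$. The hypothesis is then $|e|<|\delta|$, and it forces $\delta\neq0$. This strictness matters: at $F=F_{\min}$ the certificate is vacuous, so the boundary case never arises.

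We split on the sign of $\delta$.

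\emph{Case $\delta>0$.} Here $\hat F-F_{\min}=\delta+e\ge\delta-|e|>0$. Both $F$ and $\hat F$ satisfy the threshold, so $t$ lies in both the true and the predicted service set.

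\emph{Case $\delta<0$.} Here $\hat F-F_{\min}=\delta+e\le\delta+|e|=-(|\delta|-|e|)<0$. Both lie strictly below threshold, so $t$ is excluded from both sets.

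In either case the two indicators agree, which is the claim. Equivalently, $\hat F(t)$ lies in the open interval of radius $|\delta|$ around $F(t)$, and this interval does not contain $F_{\min}$.

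There is no real obstacle; the only subtlety is the non-strict inequality in~\eqref{eq:window}. The strict hypothesis handles it, because it gives $\hat F-F_{\min}>0$ strictly in the first case and excludes equality in the second. As an optional remark, Proposition~(Observable and swap error) turns this into a checkable sufficient condition. By~\eqref{eq:bellbound}, it suffices that $2|\hat G(t)-G(t)|<|F(t)-F_{\min}|$, or that a pointwise bound on $|\hat G-G|$ does so. Applied pointwise in $t$, this yields agreement of $\mathcal W(F_{\min})$ and its estimate away from a neighborhood of the threshold crossings.
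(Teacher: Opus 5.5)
Your argument is correct and matches the paper's proof: the paper just notes that a wrong decision would require the prediction error to reach the true margin $|F-F_{\min}|$, and your sign split on $\delta=F(t)-F_{\min}$, including the care with the non-strict inequality in the service set, is the explicit version of that observation. Your optional remark that $2|\hat G(t)-G(t)|<|F(t)-F_{\min}|$ already suffices, via \eqref{eq:bellbound}, is also valid.
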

\begin{proof}
A wrong binary decision requires the prediction error to cross the threshold, whose true margin is $|F-F_{\min}|$.
\end{proof}
The artifact verifies~\eqref{eq:swapbound} on \NswapSamples{} random draws with maximum observed ratio \SwapBoundMaxRatio{} and zero violations. Across all case/time/threshold triples, \MarginCertifiedPct{}\% are margin-certified, actual decision accuracy is \DecisionAccuracyPct{}\%, and certified-decision violations equal \DecisionBoundViolations{}.

\section{Experimental Methodology}
\subsection{Memory corpus}
We generate \Ncases{} deterministic-seed cases, \NperFamily{} per family, over $t\in[0,\Tend{}]$. Every case has \Nobs{} irregular noisy measurements restricted to $t\le\Tobs{}$; noise standard deviation is sampled from $[0.006,0.022]$. The six families are: one damped pseudomode; two--four independent modes; three--four coherently coupled modes; a 20-mode discretized Ohmic-plus-bands continuum; five coupled modes (higher-order mismatch); and the nonstationary process~\eqref{eq:nonstationary}. The first five are channel-valid passive-LTI cases. The exact reference for those cases is one eigendecomposition of the full generator; the structural family is evaluated analytically.

\subsection{Baselines and consistent TTM treatment}
All methods receive the same measurements. We compare Markov exponential calibration, \onepm, \twopm, \passive, \mpencil~\cite{hua1990matrix}, cubic spline, \ttmraw~\cite{cerrillo2014ttm}, and a representative \ipinn{} baseline following inverse open-system PINN practice~\cite{biswas2026}. The iPINN is run on one stratified case per family because its autograd calibration cost is substantially larger; it is a representative comparison, not a definitive architecture search.

Raw transfer-tensor propagation is evaluated on all software cases. It is not guaranteed contractive. For controller use only, we define the explicit radial projection
\begin{equation}
\widetilde G_{\mathrm{TTM}}(t)=
\frac{G_{\mathrm{TTM}}(t)}{\max\{1,|G_{\mathrm{TTM}}(t)|\}},
\label{eq:ttmprojection}
\end{equation}
and label it \ttmcptp{} everywhere. Raw and projected results are never merged under one name.

\subsection{Statistics and metrics}
Primary supported-family endpoints are population RMSE, Bell-fidelity RMSE, and mean service-window F1 at $F_{\min}\in\{0.65,0.72,0.80\}$. Secondary metrics include BLP backflow error~\cite{breuer2016colloquium}, validity, calibration time, uncertainty coverage, and threshold accuracy. Paired case effects use 5,000-resample bootstrap intervals and two-sided Wilcoxon signed-rank tests~\cite{wilcoxon1945}; the three primary $p$-values receive Holm correction. Family-wise $n=10$ effects are descriptive and are not advertised as separately significant.

\begin{figure*}[t]
\centering
\includegraphics[width=0.72\textwidth]{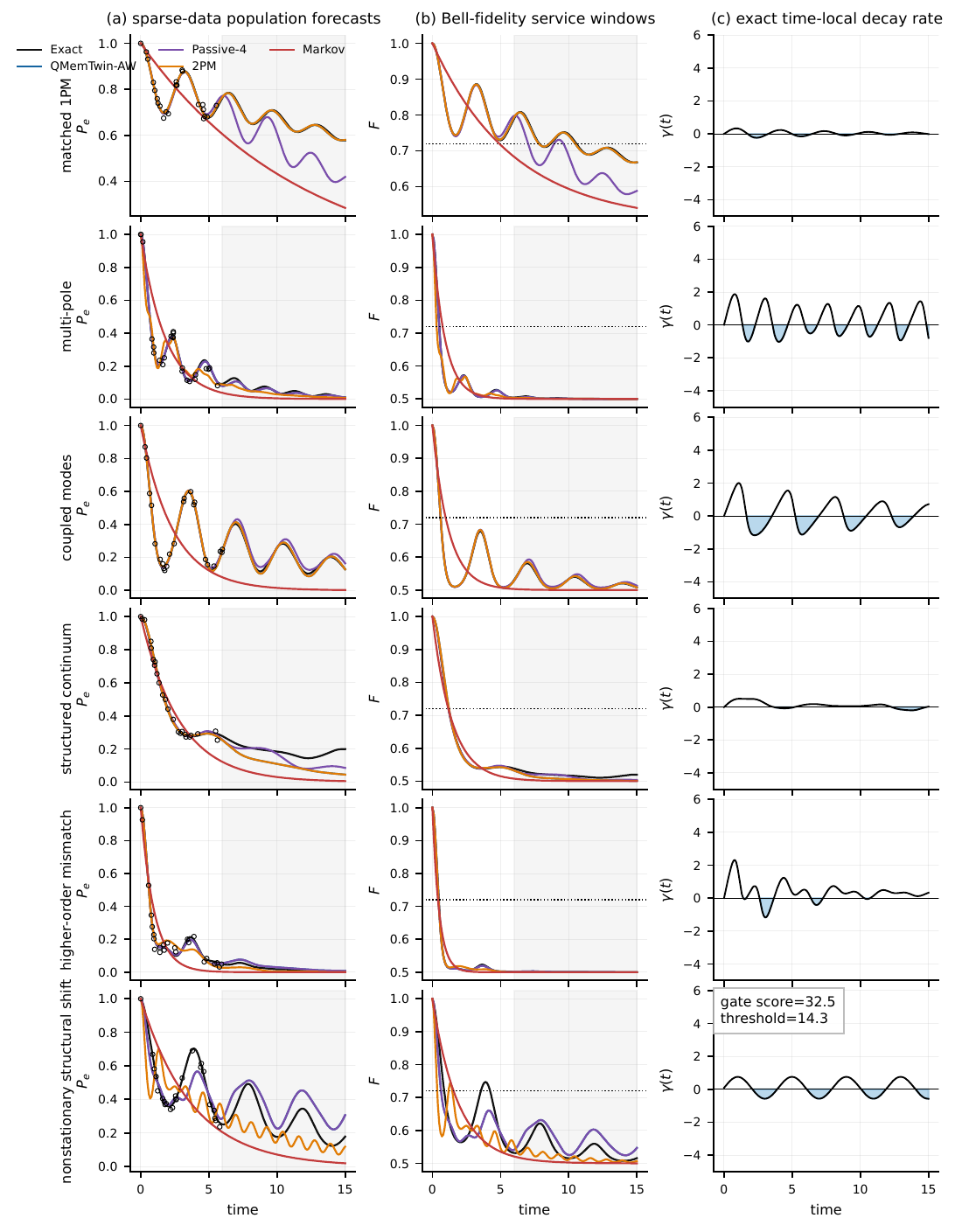}
\caption{Representative trajectories for all six families. Sparse points end at $T_{\mathrm{obs}}=6$; shading denotes forecasting. The final row is structurally nonstationary and is used to test the adequacy gate, not to claim universal extrapolation. Negative exact time-local rates identify information-backflow intervals.}
\label{fig:trajectories}
\end{figure*}

\begin{table*}[t]
\centering
\caption{Software benchmark. Supported-uncertain means multi, coupled, continuum, and higher-order families ($n=40$). Values are means except the raw TTM distribution, which is reported explicitly.}
\label{tab:software}
\footnotesize
\begin{tabular}{@{}lrrrrrr@{}}
\toprule
Method & Matched $p$ RMSE & Supported $p$ RMSE & Bell RMSE & Window F1 & Invalid cases & Median fit (s)\\
\midrule
\onepm & \MatchedOnePM & \UnknownOnePM & -- & -- & 0 & --\\
\twopm & -- & \UnknownTwoPM & \UnknownTwoPMBell & \UnknownTwoPMFone & 0 & \MedianTwoPMTime\\
\passive & \MatchedPassive & \UnknownPassive & -- & -- & 0 & \MedianPassiveTime\\
\qaic{} sensitivity & -- & \UnknownAIC & -- & -- & 0 & --\\
\qmem{} (AICc) & \MatchedQ & \UnknownQ & \UnknownQBell & \UnknownQFone & 0 & \MedianQTime\\
Markov & -- & \UnknownMarkov & -- & -- & 0 & --\\
\ttmcptp & -- & \UnknownTTMCPTP & -- & -- & 0 & \MedianTTMTime\\
\ttmraw & -- & median \TTMRawMedian & same$^\dagger$ & same$^\dagger$ & \TTMRawInvalid/\TTMN & \MedianTTMTime\\
\bottomrule
\end{tabular}\\[1mm]
\parbox{0.98\textwidth}{\scriptsize \ttmraw{} population RMSE has mean \TTMRawMean, IQR $[\TTMRawQOne,\TTMRawQThree]$, and maximum \TTMRawMax. $^\dagger$Bell/service metrics clip probabilities internally and therefore correspond to the projected trajectory; they are not used to conceal raw population divergence.}
\end{table*}

\section{Identification Results}
\subsection{Matched and supported-family regimes}
Table~\ref{tab:software} reports the result that motivated the redesign. In the matched regime, \onepm{} population RMSE is \MatchedOnePM, lower than QMemTwin's \MatchedQ and far below always-flexible \passive{} (\MatchedPassive). When the microscopic family is known, the known-family fit should be used.

Across \Nunknown{} supported but family-uncertain cases, QMemTwin obtains population RMSE \UnknownQ versus \UnknownTwoPM for \twopm, a \PopReductionTwoPM{}\% reduction. The paired mean reduction is \PopEffect (95\% CI $[\PopCILow,\PopCIHigh]$), with \PopWins/40 wins and Holm-adjusted $p=\PopPHolm$. Bell RMSE falls from \UnknownTwoPMBell to \UnknownQBell; the paired reduction is \BellEffect (95\% CI $[\BellCILow,\BellCIHigh]$, adjusted $p=\BellPHolm$). Service-window F1 increases by \WindowEffect (95\% CI $[\WindowCILow,\WindowCIHigh]$, adjusted $p=\WindowPHolm$), comprising \WindowWins{} improvements, \WindowTies{} ties, and no losses.

The flexible \passive{} candidate has a slightly lower unknown-family mean than QMemTwin by $\lvert\PassiveEffect\rvert$, but the paired difference is not significant ($p=\PassiveP$). This is important: QMemTwin's value is not that averaging dominates the flexible model in every unknown case. It preserves low-order matched performance while remaining statistically comparable to the flexible candidate under uncertainty, and it exposes candidate weights rather than pretending the high-order model is always necessary.
Uncorrected \qaic{} has lower unknown-family mean than AICc by $\lvert\AICEffect\rvert$ ($p=\AICP$). We retain AICc as the primary estimator because the correction is mandated by the ratio of $k$ to $n$; Fig.~\ref{fig:distribution} makes this sensitivity visible instead of selecting AIC after inspecting test error.

\begin{figure*}[t]
\centering
\includegraphics[width=0.93\textwidth]{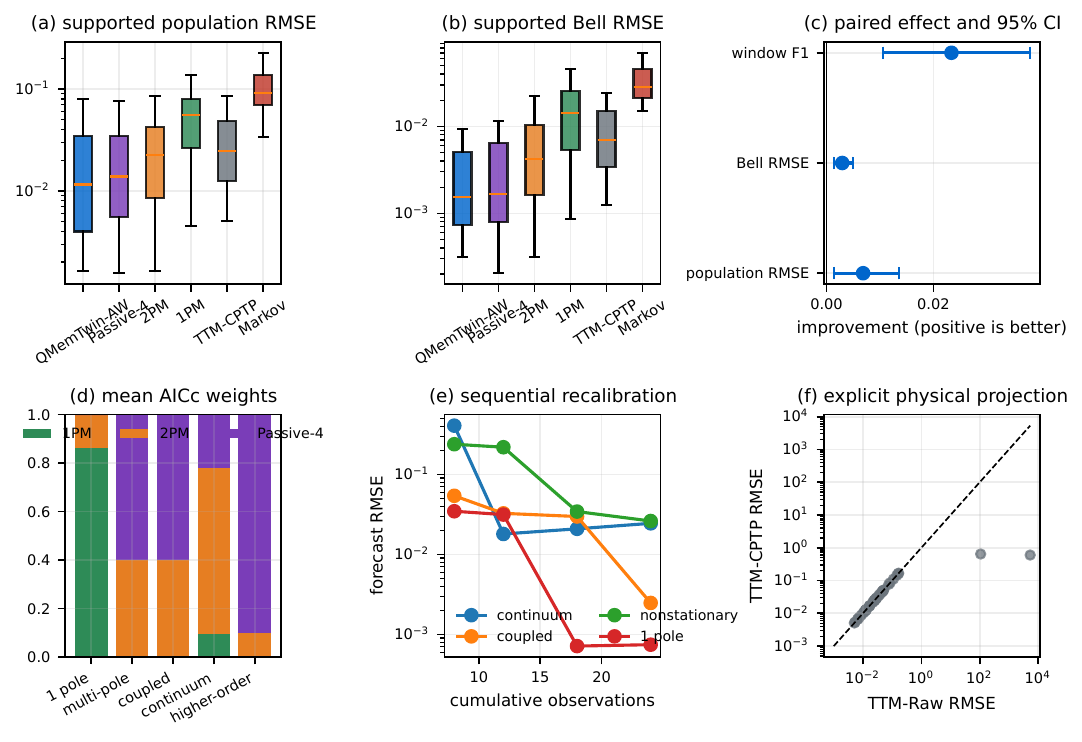}
\caption{Prediction and adaptive-complexity audit: supported-family errors, paired 95\% intervals, AICc weights, sequential forecast error, and raw/projected TTM. Detailed six-panel source figures remain in the artifact.}
\label{fig:distribution}
\end{figure*}

\begin{table}[t]
\centering
\caption{Primary paired effects: QMemTwin-AW minus/better-than 2PM over 40 supported-family cases.}
\label{tab:stats}
\scriptsize
\begin{tabular}{@{}lrrrr@{}}
\toprule
Endpoint & Mean effect & 95\% CI & Holm $p$ & W/T/L\\
\midrule
$p$ RMSE reduction & \PopEffect & [\PopCILow,\PopCIHigh] & \PopPHolm & \PopWins/0/9\\
Bell RMSE reduction & \BellEffect & [\BellCILow,\BellCIHigh] & \BellPHolm & \BellWins/0/8\\
Window-F1 increase & \WindowEffect & [\WindowCILow,\WindowCIHigh] & \WindowPHolm & \WindowWins/\WindowTies/0\\
\bottomrule
\end{tabular}
\end{table}

\subsection{Adaptive complexity and sequential updates}
AICc weights provide an interpretable complexity trace. Single-pole cases favor \onepm; coupled and higher-order cases activate \passive; continuum cases alternate between \twopm{} and \passive{} according to resolvable structure. Fig.~\ref{fig:distribution} also shows why finite-sample correction matters: AICc and AIC agree on easy cases but can differ when a small RSS decrease is purchased with twelve additional parameters.

Sequential recalibration is operational rather than rhetorical. In representative single and coupled memories, forecast RMSE decreases from \SequentialSingleStart{} to \SequentialSingleEnd{} and from \SequentialCoupledStart{} to \SequentialCoupledEnd{} as measurements accumulate from 8 to 24. Complexity weights can change discontinuously when later observations resolve a revival; that is expected model evidence, not training instability. The structural case may obtain a seemingly confident internal weight while still failing the independent adequacy gate, demonstrating why model selection and model support are different questions.

\subsection{Structural mismatch and abstention}
The adequacy threshold \AdequacyThreshold{} is calibrated without structural cases. On the disjoint test set, the gate detects \StructuralDetectionPct{}\% of nonstationary shifts at a \FalseAlarmPct{}\% false-alarm rate and AUROC \AdequacyAUC{}. Figure~\ref{fig:adequacy} decomposes the score. Structural cases have higher temporal holdout error and disagreement, but no single component is sufficient in every case. One structural case is missed, and one supported case is rejected. These failures are reported because an adequacy gate is a risk-control mechanism, not proof of universal OOD detection.

The correct operational response to a rejected case is abstention, collection of later-time measurements, or expansion to a nonstationary/finite-temperature model. We do not report the structural-family RMSE as evidence of successful extrapolation, even though some passive approximations happen to fit individual cases.

\begin{figure*}[t]
\centering
\includegraphics[width=0.93\textwidth]{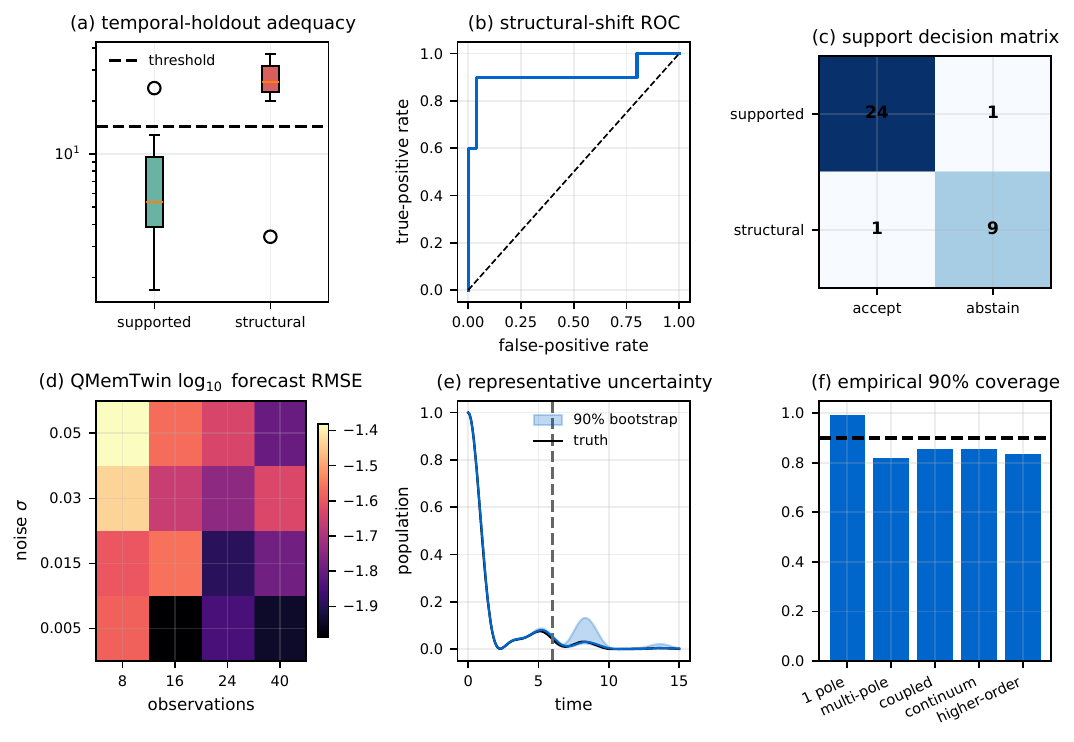}
\caption{Adequacy, robustness, and uncertainty: support-score separation, structural-shift ROC/confusion, observation/noise sensitivity, bootstrap trajectory, and empirical coverage.}
\label{fig:adequacy}
\end{figure*}

\subsection{Robustness, uncertainty, and formal checks}
Across the observation-count/noise sweep, QMemTwin mean forecast RMSE is \RobustQ versus \RobustTwoPM for \twopm. The gain is heterogeneous: high noise and eight observations can leave all candidates poorly identified, while later measurements often expose coupled structure. Nominal 90\% bootstrap intervals achieve mean population/Bell coverage \BootstrapCoveragePct{}\%; the undercoverage is visible and precludes describing the intervals as perfectly calibrated.

The lower half of Fig.~\ref{fig:adequacy} shows representative uncertainty bands, robustness surfaces, and the Bell-error certificate. QMemTwin's maximum predicted amplitude is \MaxPredAmplitude{} up to floating-point roundoff; no channel-domain violation occurs. These guarantees apply within the amplitude-damping representation and do not validate the model family itself---the role of the adequacy gate.

\subsection{Published baselines and cost}
Figure~\ref{fig:baselines} reports calibration cost versus accuracy. \ttmraw{} is fast but exhibits catastrophic extrapolation in \TTMRawInvalid{} of \TTMN{} supported cases: its mean population RMSE is \TTMRawMean, median \TTMRawMedian, and maximum \TTMRawMax. The radial projection removes channel-domain violations and reduces mean error to \TTMCPTPMean, but it is a post-hoc feasibility repair, not a guarantee that the recovered memory kernel is correct.

Matrix pencil is also fast and competitive in several multi-exponential cases, but its pole truncation and stability projection degrade continuum/coupled cases. The representative inverse-PINN subset demonstrates that a neural baseline can fit some trajectories but does not supply the passivity guarantee or model-adequacy diagnosis. Its limited six-case comparison is not used for a broad superiority claim.

QMemTwin's median calibration time \MedianQTime{}~s is higher than \twopm{} (\MedianTwoPMTime{}~s) because all three candidates are calibrated, and slightly above \passive{} (\MedianPassiveTime{}~s). This cost occurs during characterization; table lookup/interpolation during network control is negligible. The paper therefore claims improved safety and adaptation, not lower offline fitting cost.

\begin{figure*}[t]
\centering
\includegraphics[width=0.93\textwidth]{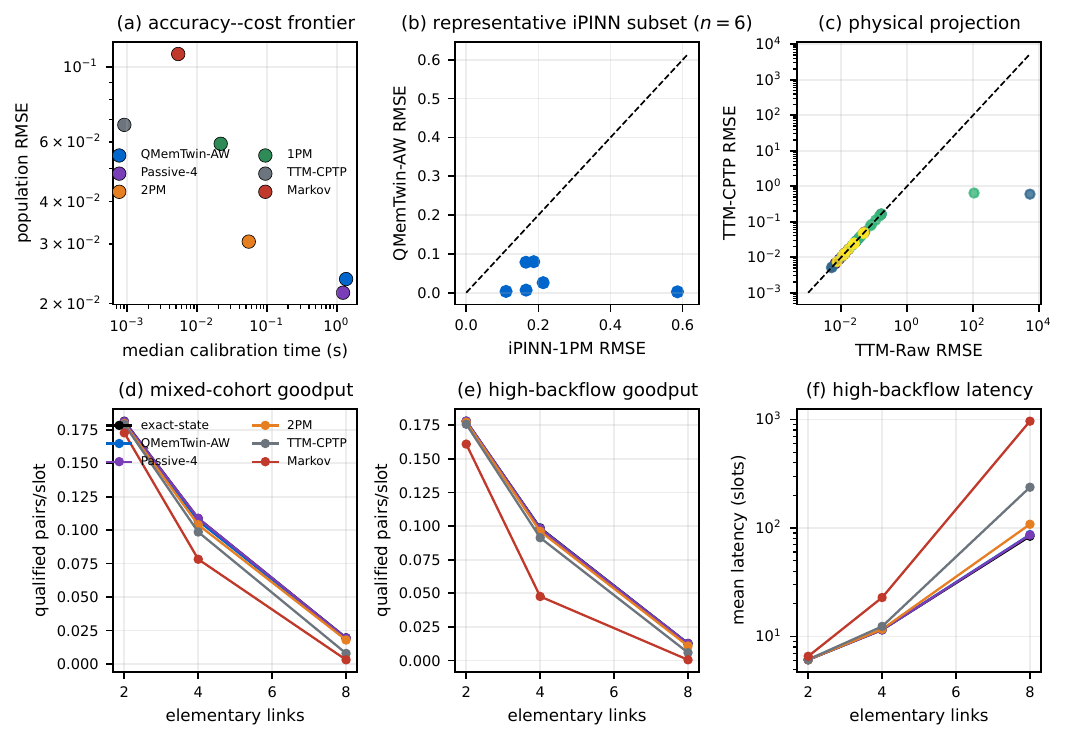}
\caption{Baseline audit and network consequences: accuracy--cost, representative inverse-PINN comparison, raw/projected TTM, and mixed/high-backflow repeater cohorts over 12 paired seeds.}
\label{fig:baselines}
\end{figure*}

\section{Multi-Hop Repeater Consequences}
\subsection{Controlled slotted benchmark}
We use a transparent slotted line network with $L\in\{2,4,8\}$ elementary links, 3,000 slots, and 12 paired seeds per method/cohort. Each link attempts generation with probability $0.28$, initial fidelity $0.97$, buffer capacity four, and BSM success probability $0.88$. The application threshold is $F_{\min}=0.68$. Pairs age according to heterogeneous true traces; the controller sees only its predicted traces. It retains a below-threshold pair when a future service window lies inside the horizon, otherwise discards it; compatible pairs can be purified with a Werner-state BBPSSW abstraction~\cite{bennett1996purification,briegel1998repeater}, and adjacent equal-length segments are greedily swapped.

All methods share event seeds and policy code; only memory prediction changes. The exact-state input is called a heuristic, not an oracle upper bound, because a greedy policy with different earlier decisions changes future stochastic opportunities. The simulator is intentionally smaller than NetSquid and is used for controlled attribution, not hardware timing fidelity.

Two cohorts are declared. The \emph{mixed supported} cohort samples all four supported uncertain families. The \emph{high-backflow stress} cohort restricts link traces to BLP measure above 0.15. The second is not a replacement for the first; it tests the regime in which memory-model differences should matter most.

\begin{algorithm}[t]
\caption{Memory-aware repeater policy}
\label{alg:network}
\footnotesize
\begin{algorithmic}[1]
\FOR{each slot}
\STATE Attempt elementary generation in non-full buffers
\STATE Age all segments using true/predicted memory traces
\STATE Discard only if no predicted service window lies in horizon $H$
\STATE Purify compatible same-segment pairs when predicted gain is positive
\FOR{swap levels $\ell=0,\ldots,\log_2L-1$}
\STATE Swap adjacent equal-length segments with qualified predicted output
\ENDFOR
\STATE Deliver predicted-qualified end-to-end pairs
\STATE Log true qualification, violation, latency, occupancy, and operations
\ENDFOR
\end{algorithmic}
\end{algorithm}

\subsection{Network results}
In the mixed 8-link cohort, QMemTwin qualified goodput is \AllEightQGoodput{} pairs/slot versus \AllEightTwoPMGoodput{} for \twopm, a \AllEightGoodputGainPct{}\% mean gain. The paired confidence interval crosses zero and the Wilcoxon test is not significant; this is reported in the lower half of Fig.~\ref{fig:baselines}. Mean latency is lower by \AllEightLatencyGain{} slots, with borderline paired evidence. Thus the broad cohort supports a modest trend, not a universal network improvement.

In the predeclared high-backflow 8-link stress cohort, QMemTwin reaches \StrongEightQGoodput{} versus \StrongEightTwoPMGoodput{}, a \StrongEightGoodputGainPct{}\% gain. The paired effect is \StrongGoodputEffect{} pairs/slot (95\% CI $[\StrongGoodputCILow,\StrongGoodputCIHigh]$, $p=\StrongGoodputP$). Mean latency improves by \StrongLatencyEffect{} slots (95\% CI $[\StrongLatencyCILow,\StrongLatencyCIHigh]$, $p=\StrongLatencyP$). Violation-rate differences are not significant ($p=\StrongViolationP$), so we do not claim a reliability reduction in this revised experiment.

The always-flexible \passive{} controller is competitive in both cohorts, consistent with its physical RMSE. \ttmcptp{} is conservative and slow at eight links, while raw TTM is never supplied to the controller. Markov performance degrades sharply with depth because it suppresses future windows. These results establish that memory identification can affect network behavior.

\section{Conclusion}
\label{sec:conclusion}

This paper introduced QMemTwin, a model-uncertainty-aware digital twin for identifying and predicting non-Markovian quantum-repeater memory dynamics from sparse characterization measurements. QMemTwin calibrates a bank of contractive passive state-space models, combines supported candidates using finite-sample-corrected Akaike weights, propagates parameter and model uncertainty to network-facing quantities, and applies a temporal-holdout adequacy test before releasing predictions. The passive parameterization guarantees \(|G(t)|\leq 1\), and the convex amplitude average preserves this property, ensuring completely positive and trace-preserving amplitude-damping channels. When the correct physical family is known, direct calibration remains preferable: in the matched one-pseudomode regime, \onepm{} attains a population RMSE of \(0.005460\), compared with \(0.005895\) for QMemTwin. Across 40 family-uncertain memories, however, QMemTwin reduces population RMSE relative to \twopm{} by \(0.006854\) with a 95\% confidence interval of \([0.001464,0.013523]\) and Holm-corrected \(p=0.00510\). Bell-fidelity RMSE decreases by \(0.002946\), with a 95\% confidence interval of \([0.001361,0.004905]\) and \(p=2.07\times10^{-5}\), while service-window F1 increases by \(0.023387\), with a 95\% confidence interval of \([0.010620,0.038113]\), no observed degradations, and \(p=0.00294\). Passive-4 achieves a slightly lower mean population error, but its difference from QMemTwin is not statistically significant; the model average instead retains simpler models when supported and exposes uncertainty when several descriptions remain plausible.

QMemTwin does not assume that its candidate bank contains every possible memory process. The higher-order coupled benchmark represents latent-order mismatch within the passive linear time-invariant class, while structural mismatch is evaluated separately using time-dependent passive generators outside the fixed-generator bank. The temporal-holdout adequacy gate detects 90\% of these structural-shift cases, produces a 4\% false-alarm rate on supported memories, and attains an AUROC of \(0.916\); rejected cases trigger abstention rather than unsupported long-horizon prediction. The multi-hop simulations further show that memory-model accuracy can affect network operation, although the benefit depends on the regime. In the mixed supported-memory cohort, the eight-link qualified-goodput difference relative to \twopm{} is not statistically significant. In the predeclared high-backflow stress cohort, QMemTwin increases eight-link fidelity-qualified goodput by 15.2\% and reduces mean delivery latency by \(21.34\) slots, demonstrating the importance of model uncertainty when delayed revivals and nonmonotonic dynamics influence retain, discard, purification, and swapping decisions.

The conclusions remain limited to zero-temperature,
single-excitation, local amplitude-damping dynamics. Finite-temperature
excitation, leakage, correlated memories, active control pulses,
detector errors, fabrication variability, and long-term experimental
drift are not included. The AICc calculation uses a weighted-residual
likelihood approximation with \(n=3N\), and alternative likelihood
models, Bayesian evidence, or hierarchical priors may alter the
candidate weights. Finally, the network layer uses Werner-state
abstractions and a greedy control policy and is not intended to replace
NetSquid, optical-repeater experiments, or full hardware-level network
simulation. Overall, the central contribution is the separation among
matched physics, supported model uncertainty, and detected structural
mismatch: known low-order models should be used directly, contractive
model averaging should be used when several physical descriptions
remain plausible, and the twin should abstain when temporal validation
indicates inadequate model support.

\balance
\begingroup
\def\IEEEbibitemsep{0pt}
\renewcommand{\footnotesize}{\fontsize{7.8}{8.3}\selectfont}
\bibliographystyle{IEEEtran}
\bibliography{references}
\endgroup
\end{document}

%% file: numbers.tex
\newcommand{\Ncases}{60}
\newcommand{\Nsupported}{50}
\newcommand{\Nunknown}{40}
\newcommand{\Nstructural}{10}
\newcommand{\NperFamily}{10}
\newcommand{\Nobs}{24}
\newcommand{\Tobs}{6}
\newcommand{\Tend}{15}
\newcommand{\MatchedOnePM}{0.00546}
\newcommand{\MatchedQ}{0.00589}
\newcommand{\MatchedPassive}{0.06507}
\newcommand{\UnknownQ}{0.02364}
\newcommand{\UnknownAIC}{0.02181}
\newcommand{\UnknownPassive}{0.02154}
\newcommand{\UnknownTwoPM}{0.03049}
\newcommand{\UnknownOnePM}{0.05928}
\newcommand{\UnknownMarkov}{0.10912}
\newcommand{\UnknownTTMCPTP}{0.06742}
\newcommand{\UnknownQBell}{0.00599}
\newcommand{\UnknownTwoPMBell}{0.00894}
\newcommand{\UnknownQFone}{0.982}
\newcommand{\UnknownTwoPMFone}{0.959}
\newcommand{\PopReductionTwoPM}{22.5}
\newcommand{\BellReductionTwoPM}{33.0}
\newcommand{\PopEffect}{0.00685}
\newcommand{\PopCILow}{0.00146}
\newcommand{\PopCIHigh}{0.01352}
\newcommand{\PopPHolm}{0.0051}
\newcommand{\PopWins}{31}
\newcommand{\BellEffect}{0.00295}
\newcommand{\BellCILow}{0.00136}
\newcommand{\BellCIHigh}{0.00491}
\newcommand{\BellPHolm}{2.07e-05}
\newcommand{\BellWins}{32}
\newcommand{\WindowEffect}{0.0234}
\newcommand{\WindowCILow}{0.0106}
\newcommand{\WindowCIHigh}{0.0381}
\newcommand{\WindowPHolm}{0.00294}
\newcommand{\WindowWins}{13}
\newcommand{\WindowTies}{27}
\newcommand{\PassiveEffect}{-0.00209}
\newcommand{\PassiveP}{0.927}
\newcommand{\AICEffect}{-0.00183}
\newcommand{\AICP}{0.040}
\newcommand{\AdequacyAUC}{0.916}
\newcommand{\StructuralDetectionPct}{90}
\newcommand{\FalseAlarmPct}{4}
\newcommand{\AdequacyThreshold}{14.29}
\newcommand{\TTMRawMean}{135.945}
\newcommand{\TTMRawMedian}{0.02479}
\newcommand{\TTMRawQOne}{0.01253}
\newcommand{\TTMRawQThree}{0.04816}
\newcommand{\TTMRawMax}{5329.0}
\newcommand{\TTMRawInvalid}{2}
\newcommand{\TTMN}{40}
\newcommand{\TTMCPTPMean}{0.06742}
\newcommand{\RobustQ}{0.0220}
\newcommand{\RobustTwoPM}{0.0303}
\newcommand{\BootstrapCoveragePct}{87.0}
\newcommand{\MedianQTime}{1.356}
\newcommand{\MedianTwoPMTime}{0.055}
\newcommand{\MedianPassiveTime}{1.227}
\newcommand{\MedianTTMTime}{0.0009}

\newcommand{\MaxPredAmplitude}{1.000000}
\newcommand{\MarginCertifiedPct}{96.8}
\newcommand{\DecisionAccuracyPct}{98.5}
\newcommand{\DecisionBoundViolations}{0}
\newcommand{\NswapSamples}{120000}
\newcommand{\SwapBoundMaxRatio}{0.99996}
\newcommand{\AllEightQGoodput}{0.01922}
\newcommand{\AllEightTwoPMGoodput}{0.01792}
\newcommand{\AllEightGoodputGainPct}{7.3}
\newcommand{\AllEightLatencyGain}{16.77}
\newcommand{\StrongEightQGoodput}{0.01244}
\newcommand{\StrongEightTwoPMGoodput}{0.01081}
\newcommand{\StrongEightGoodputGainPct}{15.2}
\newcommand{\StrongGoodputEffect}{0.00164}
\newcommand{\StrongGoodputCILow}{0.00042}
\newcommand{\StrongGoodputCIHigh}{0.00278}
\newcommand{\StrongGoodputP}{0.033}
\newcommand{\StrongLatencyEffect}{21.34}
\newcommand{\StrongLatencyCILow}{9.57}
\newcommand{\StrongLatencyCIHigh}{33.23}
\newcommand{\StrongLatencyP}{0.009}
\newcommand{\StrongViolationP}{0.938}
\newcommand{\SequentialSingleStart}{0.0346}
\newcommand{\SequentialSingleEnd}{0.00075}
\newcommand{\SequentialCoupledStart}{0.0540}
\newcommand{\SequentialCoupledEnd}{0.00248}